\documentclass[a4paper]{cas-dc}

\usepackage[numbers]{natbib}
\usepackage{amsthm,amsfonts,amssymb}
\usepackage{color,abraces,tcolorbox}
\usepackage{mathtools}
\usepackage{times}


\def\I{{}^{\tt I}}
\def\B{{}^{\tt B}}
\def\Exp{\mbox{Exp}}

\def\calx{\mathcal{X}}
\def\calj{\mathcal{J}}
\def\calh{\mathcal{H}}
\def\rea{\mathbb{R}}
\def\bfx{\mathbf{x}}
\def\bfp{\mathbf{p}}
\def\bfw{\mathbf{w}}

\newtheorem{remark}{{\it Remark}\rm}

\newtheorem{definition}{Definition}

\newtheorem{assumption}{Assumption}

\newtheorem{proposition}{Proposition}

\def\qed{\hfill $\square$}

\def\col{\mbox{col}}


\def\begmat#1{\begin{bmatrix}#1\end{bmatrix}}

\usepackage{color}




\begin{document}
\let\WriteBookmarks\relax
\def\floatpagepagefraction{1}
\def\textpagefraction{.001}

\shorttitle{}

\shortauthors{Yi and Manchester}

\title [mode = title]{On IMU preintegration: A nonlinear observer viewpoint and its applications}                      


\author[1]{Bowen Yi}[
                        auid=000,bioid=1,
                        ]

\cormark[1]


\ead{b.yi@outlook.com}


\affiliation[1]{organization={Robotics Institutes, University of Technology Sydney},
    postcode={NSW 2006}, 
    country={Australia}}

\affiliation[2]{organization={Australian Centre for Robotics, School of Aerospace, Mechanical and Mechatronic Engineering, The University of Sydney},
    postcode={NSW 2006}, 
    country={Australia}}

\author[2]{Ian R. Manchester}

\ead{ian.manchester@sydney.edu.au}

\cortext[cor1]{Corresponding author (The work has partially been done when the first author was with The University of Sydney.)}



\begin{abstract}
The inertial measurement unit (IMU) preintegration approach nowadays is widely used in various robotic applications. In this article, we revisit the preintegration theory and propose a novel interpretation to understand it from a nonlinear observer perspective, specifically the parameter estimation-based observer (PEBO). We demonstrate that the preintegration approach can be viewed as recursive implementation of PEBO in moving horizons, and that the two approaches are equivalent in the case of perfect measurements. We then discuss how these findings can be used to tackle practical challenges in estimation problems. As byproducts, our results lead to a novel hybrid sampled-data observer design and an approach to address statistical optimality for PEBO in presence of noise.
\end{abstract}

\begin{keywords}
Nonlinear observer \sep IMU preintegration \sep Robotics \sep Sampled-data estimation
\end{keywords}

\maketitle

\section{Introduction}

State estimation and perception are fundamentally important for autonomous systems \cite{BAR}. Initially, filtering approaches dominated the field of \emph{online} state estimation due to the limitation of computational capacity \cite{BERetal,DISetal}. In recent years full smoothing approaches which are based on nonlinear batch optimisation have gained popularity in numerous localisation problems, since they provide estimates with high accuracy \cite{STRetal}. However, the optimisation-based estimation framework is computationally demanding. This issue is currently becoming more urgent than ever as we have witnessed the trend of utilisation of monocular cameras with IMUs -- known as the monocular visual-inertial system (VINS) -- in real-world robotic systems. The VINS is an asynchronous sampled system, with IMUs providing measurements at a high rate. As a result, there is the need to calculate the ``standard'' inertial integration from initial conditions between two camera frames, which thus makes it a daunting task to solve in real time.

In \cite{LUPSUK}, Lupton and Sukkarieh propose the IMU preintegration approach to address the above-mentioned computational challenges. It allows pre-processing of the high-rate data from IMU to obtain low-rate pseudo measurements, in which initial conditions and the preintegrated quantities are separated, thus reducing on-line computational burden significantly. Later on, the preintegration approach was extended to kinematic models living on nonlinear manifolds \cite{FORetal}, and now is gradually becoming a popular result in the robotics community. More recently, it has been improved and elaborated from several different perspectives, e.g., analytical solutions for graph optimisation \cite{ECKetal}, approximation via Gaussian process \cite{LEGetal}, and generalisation on groups \cite{BARBON}, just to name a few. Since its introduction, the preintegration approach has been widely applied in various robotic systems, see e.g. \cite{BROetal,FOUetal,QINetal}.

In this paper we prove that the preintegration approach can be derived following the observer theory for nonlinear systems, in particular the parameter estimation-based observer (PEBO). It is a novel kind of constructive observer technique recently proposed by Ortega \emph{et al.} in \cite{ORTetalscl} and later elaborated in \cite{ORTetalAUT}, in which state observation is translated into an on-line parameter identification problem; see \cite{YIetalTAC18} for a geometric interpretation. Recently, we have extended the PEBO methodology from Euclidean space to marix Lie groups, which has been proven instrumental in solving several open problems in observer design for robotic systems \cite{YIetalAUT,YIetalCDC,YIWAN}.

Although the approaches of preintegration and PEBO have been pursued in parallel in different communities, it is interesting and generally important to elucidate the connections between these two frameworks. By bridging these distinct bodies of research, this paper aims to unveil their relationship and present the following main contributions.

\begin{itemize}
    \item[1)] We revisit the preintegration theory and provide a nonlinear observer interpretation to it. Namely, the preintegrated signals are exactly the dynamic extended variables (i.e., fundamental matrices) in PEBO but implemented in \emph{a moving horizon}. Under some mild assumptions, we establish the equivalence between the preintegration and PEBO approaches.
    
    \item[2)] We show the practical utility of the resulting equivalence in addressing several practical challenges encountered in state estimation problems. In particular, it provides a novel solution to design sampled-data observers for continuous-time dynamical systems and enables the attainment of statistical optimality in PEBO in the presence of noisy measurements.
    
\end{itemize}

The remainder of the paper is organised as follows. In Section \ref{sec2} we consider the dynamical models in Euclidean space as an illustrative example to recall the preintegration and PEBO approaches. It is followed by some preliminary results about the connections between two approaches in Euclidean space. In Section \ref{sec3}, we present our main results on the manifold $SO(3) \times \rea^n$, which is the state space considered in numerous robotic and navigation-related problems, and also the original motivation of IMU preintegration. Then, we discuss some applications of the main claim in Section \ref{sec4}. The paper is wrapped up by some concluding remarks in Section \ref{sec5}.

\emph{Notation:} For a given variable or signal $x$, sometimes we may simply write $x(t)$ as $x_t$, and the dependency of signals on $t$ is omitted for brevity when clear. We use $x(t_1^-)$ to denote the value of $x$ just before $t_1$, i.e. $x(t_1^-):= \lim_{s>0,s\to 0} x(t_1 - s) $. We use $|x|$ to represent the standard Euclidean norm of a vector. $SO(3)$ represents the special orthogonal group, which is defined as $SO(3)=\{R\in \rea^{3\times3}|R^\top R = I_3, ~ \det(R) =1\}$. The operator $(\cdot)_\times$ is defined such that $a_\times b := a\times b$ for two vectors $a, b \in \rea^n$. For a variable $y$, we use $\bar y$ to represent its noisy measurement from sensors. $\lambda_{\tt max}\{A\}$ denotes the largest eigenvalue of a symmetric matrix $A\in \rea^{n\times n}$.

\section{Preliminary Results in Euclidean Space}
\label{sec2}

We start with the deterministic systems with states living in Euclidean space to introduce our basic idea. Its extension to the systems on manifolds, which is tailored for pose estimation of rigid bodies, will be presented in the next section.

\subsection{Problem Set}

In many engineering problems, there is a need to estimate the unknown internal state $x \in \mathcal{X} \subset \rea^n $ for the linear time-varying (LTV) dynamical system
\begin{equation}
\label{LTV:1}
\begin{aligned}
  \dot x & ~=~ A_t x + B_t u
  \\
  y & ~=~ C_t x + D_t u
\end{aligned}
\end{equation}
with input $u \in \rea^m$ and the output $y \in \rea^p$, and we usually consider the state space $\calx$ as $\rea^n$. Since sensor noise is unavoidable in practice, the measured signals of $u$ and $y$ satisfy
\begin{equation}
\label{noisy:uy}
\bar u = u + \epsilon_u, \quad \bar y = y + \epsilon_y,
\end{equation}
in which $\epsilon_u \in \rea^m $ and $ \epsilon_y\in \rea^p$ represent measurement noise, usually modelled as zero-mean white-noise processes. This estimation problem has been well addressed by the Kalman filter and the full-information estimation approach (a.k.a. batch optimisation). 


In some applications, despite admitting continuous-time models, we are concerned with estimation of the state $x$ at some discrete instants $\{t_k\}_{k \in \mathbb{N}}$. This is because multiple sensors provide information with different rates -- sometimes even having obvious time-scale separation. For example, in the problem of visual inertial navigation (VIN) for robotics, the IMU provides data at a very high rate, and it is reasonable to roughly view inertial measurements as some continuous-time signals. In contrast, it is well-known that image processing is relatively computationally heavy, and thus the camera provides data at a low rate. As a result, if the estimation algorithm is being processed at the same rate as the IMU, then it is usually not tractable on-line. 

In this paper, we make the following assumption. This scenario exists in many practical problems, particularly in robotic systems.

\begin{assumption}\rm\label{ass:1}
The input $\bar u$ is available as a continuous-time signal, and the output $\bar y$ is measured at some discrete instances $\{t_k\}_{k\in \mathbb{N}}$.
\end{assumption}

The main results can be extended straightforwardly to discrete-time systems with multi-rate sampled data (i.e. high-frequency input $\bar u$ and low-rate output $\bar y$), and we do not discuss it in this paper.

\subsection{Preintegration in Euclidean Space}

To address the state estimation of $x$ under Assumption \ref{ass:1}, Lupton and Sukkarieh proposed in \cite{LUPSUK} the preintegration approach to generate pseudo-measurements to improve on-line efficiency. Let us recall its basic idea with the LTV model \eqref{LTV:1} as follows.

\begin{proposition}\label{prop:1}
\rm \cite{BARBON} Consider the LTV system \eqref{LTV:1}. Given two instants $t_k< t_{k+1}$, there exist a matrix $F_k$ and a vector $v_k$ such that the state satisfies
\begin{equation}
\label{xk2k+1}
   x(t_{k+1}) = F_k  x(t_k) + v_k.
\end{equation}
for all $x({t_k}) \in \rea^n$. \qed
\end{proposition}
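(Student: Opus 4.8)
The plan is to invoke the variation of constants formula for the LTV system \eqref{LTV:1}, which expresses the state at any time as an affine function of the initial state. First I would introduce the state transition matrix $\Phi(t,s)$ associated with the homogeneous part $\dot x = A_t x$, defined as the unique solution to $\frac{\partial}{\partial t}\Phi(t,s) = A_t \Phi(t,s)$ with $\Phi(s,s) = I_n$. Under mild regularity of $A_t$ (for instance, piecewise continuity, which is automatic in the sampled-data setting considered here), this matrix is well defined and invertible, and enjoys the semigroup property $\Phi(t,\tau)\Phi(\tau,s) = \Phi(t,s)$.

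The key step is then to write the closed-form solution of \eqref{LTV:1} starting from $x(t_k)$ via the variation of constants formula,
\begin{equation*}
  x(t) = \Phi(t,t_k) x(t_k) + \int_{t_k}^{t} \Phi(t,s) B_s u(s)\, ds.
\end{equation*}
Evaluating this expression at $t = t_{k+1}$ yields precisely the claimed affine relation \eqref{xk2k+1}, with the identifications
\begin{equation*}
  F_k := \Phi(t_{k+1}, t_k), \qquad v_k := \int_{t_k}^{t_{k+1}} \Phi(t_{k+1}, s) B_s u(s)\, ds.
\end{equation*}

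The crucial observation --- and essentially the whole content of the statement --- is that neither $F_k$ nor $v_k$ depends on the value of the initial state $x(t_k)$: the matrix $F_k$ is determined solely by $A_t$ on the interval $[t_k, t_{k+1}]$, while the offset $v_k$ is determined solely by $A_t$, $B_t$, and the input $u$ on that interval. Hence the same pair $(F_k, v_k)$ works for all $x(t_k) \in \rea^n$, which establishes the ``for all'' part of the claim. I expect no genuine obstacle here, since the result is a direct specialisation of a classical linear-systems identity; the only point requiring a moment of care is confirming the regularity conditions under which $\Phi$ exists and is absolutely continuous in its first argument, and checking that the integral defining $v_k$ is well posed. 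It is worth emphasising that this decomposition is exactly the structural feature the preintegration approach exploits, since $v_k$ --- the preintegrated pseudo-measurement --- can be accumulated on-line independently of the unknown initial condition $x(t_k)$, which is precisely the separation alluded to in the paragraphs above.
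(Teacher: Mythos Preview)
Your argument is correct and is essentially the same as the paper's approach: the paper does not spell out a proof but cites \cite{BARBON} and immediately gives the construction \eqref{pre_int}, in which $F$ solves $\dot F=A_tF$, $F(t_k^+)=I_n$ and $v$ solves $\dot v=A_tv+B_tu$, $v(t_k^+)=0_n$; these are exactly your $\Phi(t,t_k)$ and $\int_{t_k}^{t}\Phi(t,s)B_su(s)\,ds$, so your variation-of-constants identification $F_k=\Phi(t_{k+1},t_k)$ and $v_k=\int_{t_k}^{t_{k+1}}\Phi(t_{k+1},s)B_su(s)\,ds$ matches the paper's $F_k=F(t_{k+1}^-)$, $v_k=v(t_{k+1}^-)$ verbatim.
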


Its proof is available in \cite{BARBON}. We underscore that the matrices $F_k$ and $v_k$ are independent of the state $x$, which are accessible signals and uniquely determined by the measurable signals $A_t,B_t$ and $u_t$. Hence, we call $F_k$ and $v_k$ as \emph{preintegration}, and they can be calculated as
$$
F_k = F(t_{k+1}^-), \quad v_k = v(t_{k+1}^-),
$$
which is generated by the dynamics
\begin{equation}
\label{pre_int}
\left.
\begin{aligned}
   \dot F & = A_t F, \quad  F(t_k^+) = I_n
   \\
   \dot v & = A_t v + B_tu, \quad  v(t_k^+) = 0_n
\end{aligned}
~\right\}
\quad \mbox{Preintegration}
\end{equation}
Note that when implementing preintegration we only have the measurable signal $\bar y$ rather than the perfect output $y$, and thus the second preintegration is implemented as
\begin{equation}
\label{pre_int2}
\dot{\bar v} = A_t \bar v + B_t\bar u , \quad \bar v(t_k^+) =0_n,
\end{equation}
where $\bar v$ may be viewed as the noisy signal of $v$. They can be written as the Picard integral for $t \in ( t_k, t_{k+1} )$
$$
\begin{aligned}
F_t  =  \int_{t_k}^t  A_s F_s ds, \quad
\bar v_t  =  \int_{t_k}^t \Big( A_s v_s + B_s \bar u_s \Big) ds,
\end{aligned}
$$
and implemented numerically via discretization.

Now, using the preintegration we obtain the equation \eqref{xk2k+1} that is a new LTV discrete-time dynamical model with known $F_k$ and $v_k$, and the (nominal) output function
\begin{equation}
\label{ytk}
y(t_k) = C(t_k) x(t_k) + D(t_k) u(t_k).    
\end{equation}
Supposed the current moment is $t_N$, in the full-information estimation (FIE) approach we need to estimate $\{x(t_k)\}_{k\in \ell}$ with $\ell:= \{0,\ldots, N\}$. The simplest case is to consider solving the optimisation 
\begin{equation}
\label{opt:x}
\begin{aligned}
&(\hat \bfx, \hat \bfw)  = \underset{\bfx, \bfw}{\arg\min} ~ J_{\tt x}(\bfx) + J_{\tt w}(\bfw)
\\
&{\rm s.t.} \quad  \hat x(t_{k+1}) - F_k  \hat x(t_k) - \bar{v}_k = w_k
\end{aligned}
\end{equation}
with the cost functions\footnote{We assume that $y$ is measured from $t_0$ without loss of generality.}
\begin{equation}
\label{J}
\begin{aligned}
  J_{\tt x}(\bfx) & = \sum_{k = 0}^{N-1} \gamma_k \big|\bar y(t_k) - C(t_k) x(t_k) - D(t_k) \bar{u}(t_k) \big|^2
  \\
 J_{\tt w}(\bfw) & = \sum_{k = 0}^{N-1} \gamma_k'|w_k|^2
\end{aligned}
\end{equation}
and the definitions
$$
\begin{aligned}
\bfx & := \col( x_0, \ldots,  x_{N}), \quad \hat\bfx := \col( \hat x_0, \ldots,  \hat x_{N})
\\
\bfw & := \col(w_0,\ldots, w_N).
\end{aligned}
$$
The coefficients $\gamma_k, \gamma_k'>0$ may be involved to weight different instances, and two widely-used selections are:
\begin{itemize}
    \item[(i)] using the norm inverse of some covariance for the consideration of noise; and 
    \item[(ii)] selecting $\gamma_k = \lambda^{N-k}$ with $\lambda \in (0,1)$ to represent forgetting factors in on-line deterministic estimators.
\end{itemize}

The above summary of preintegration is presented as a high-level framework, which may be implemented in different ways. For instance, the optimisation problem can be solved for each instance (a.k.a. full-information estimation, FIE), in a moving-horizon, or incrementally as done in LTV Kalman-Bucy filters at the discrete instants $\{t_k\}$ in a lower sampling rate; the optimisation may also be replaced by computing the optimal maximum a posteriori (MAP) estimate, and combined with factor graphs.

To summarise, the basic idea is to use the preintegration \eqref{xk2k+1} to transform the continuous-time model \eqref{LTV:1} into the discrete model \eqref{xk2k+1} with low-rate measurements, and then complete the estimation task.\footnote{To distinguish from the other estimates in the remainder of the paper, we write the estimate from the preintegration approach as $\hat \bfx_{\tt PI}$.} Note that a salient feature of \eqref{xk2k+1} is the separation between the preintegrated signals $F,v$ and the initial condition $x(t_k)$, which is capable of reducing significant on-line computational burden in the nonlinear context.

\vspace{0.2cm}

\fbox{ \parbox { .9\linewidth} 
{
{\em State Estimation via Preintegration:} 
\begin{itemize}
    \item[-] preintegration: \eqref{pre_int}
    \item[-] estimate: $\hat \bfx_{\tt PI}$
    \item[-] optimisation: \eqref{opt:x}-\eqref{J}
\end{itemize}
}}
\vspace{0.2cm}

\begin{remark}
\rm
The computational burden of estimation of the original continuous-time system \eqref{LTV:1} is not prohibitive, due to \emph{linearity} in the model. However, when considering the visual navigation problem on manifolds, high nonlinearity and non-convexity  limit the performance and complicate the analysis of both full-information estimation and filtering approaches.
\end{remark}


\subsection{Parameter Estimation-Based Observer}

Recently, a new constructive nonlinear observer technique, namely PEBO, has been developed for a class of state-affine systems \cite{ORTetalAUT,ORTetalscl}. Its basic idea is translating state estimation into the one of some constant variables and then identifying them online. This provides an efficient way to simplify observer design.

Instead of introducing the approach comprehensively, we limit ourselves to the LTV system \eqref{LTV:1} to show the basic idea of PEBO. Following \cite{ORTetalAUT}, the first step is to design the dynamic extension 
\begin{equation}
\label{dyn_ext}
\dot \xi = A_t \xi + B_t u, \quad \xi(t_0)= \xi_0,
\end{equation}
with $\xi\in \rea^n$, in which the initial condition $\xi_0$ is selected by users thus being known. We underline here that the PEBO approach is developed for the deterministic system with the perfect measurement $u$, and the robustness to various uncertainties can be addressed from standard Lyapunov analysis. In this subsection, we consider the case with access to the perfect $u$, and its extension to with the noisy measurement $\bar u$ will be discussed in Section \ref{sec4}.

If we define the error $e:= x -\xi$, it yields the error dynamics
$$
\dot e = A_t e.
$$
As shown in linear systems theory \cite{RUG}, the solution of $e$ is given by $e(t) = \Phi(t,0) e(0)$, in which $\Phi(t,s)$ is the state transition matrix of $A_t$ from $s$ to $t$. Though it is generally impossible to write down the function $\Phi(t,s)$ analytically, it can be calculated by implementing the dynamics of fundamental matrix $\Omega$ on-line
\begin{equation}
\label{dot:omega}
\begin{aligned}
\dot \Omega & ~=~  A_t \Omega, \qquad~~~~ \Omega(t_0)= I_n
\\
\Phi(t,s) & ~=~  \Omega(t) \Omega(s)^{-1}.
\end{aligned}
\end{equation}
Then, we have the new parameterisation to the state $x$ as
$
 x_t = \xi_t - \Omega_t \xi_0 + \Omega_t \theta
$
with the unknown vector $\theta := x(0)$. It means that once the parameter $\theta$ have been determined as $\hat\theta$, one has the state estimation as 
\begin{equation}
\label{hat:x}
\hat x_t = \xi_t - \Omega_t \xi_0 + \Omega_t \hat\theta.
\end{equation}
By plugging the new parameterization of $x$ into \eqref{LTV:1}, we have the linear regression model with respect to $\theta$ as follows
\begin{equation}
\label{LRE:1}
 Y_t = C_t\Omega_t \theta
\end{equation}
with the variable
$
 Y_t := y_t - C_t \xi_t + C_t\Omega_t \xi_0 - D_t \bar u_t.
$
Its noisy ``measurement'' is defined accordingly as
\begin{equation}
\label{Y}
\bar Y_t := \bar y_t - C_t \xi_t + C_t\Omega_t \xi_0 - D_t \bar u_t.
\end{equation}
The remainder is to estimate $\theta$ from the regressor \eqref{LRE:1} on-line. With measurements collected at $\{t_k\}_{k \in \mathbb{N}}$, the simplest case at the moment $t_N$ is to solve the optimisation
\begin{equation}
\label{hat:theta}
\hat\theta:= \underset{\theta \in \rea^n}{\arg\min} ~ \sum_{k=0}^{N-1} \gamma_k \Big|  \bar Y(t_k) - C(t_k)\Omega(t_k)\theta \Big|^2,
\end{equation}
with some coefficients $\gamma_k >0$.

Hence, the PEBO approach can be summarised below.

\vspace{0.2cm}

\fbox{ \parbox { .9\linewidth} 
{
{\em Parameter Estimation-Based Observer:} 
\begin{itemize}
    \item[-] dynamics: \eqref{dyn_ext}, \eqref{dot:omega}
    \item[-] estimate (observer output): $\hat \bfx_{\tt PEBO}$ from \eqref{hat:x}
    \item[-] optimisation: \eqref{hat:theta}
\end{itemize}
}}
\vspace{0.2cm}

\begin{remark}
\rm 
For batch optimisation or filtering approaches, it is necessary to impose some ``informative'' excitation or observability assumptions on the model \eqref{LTV:1} along the trajectory. There are some observer design tools requiring observability/detectability \emph{uniformly along all feasible solutions}, e.g., \cite{BERetal,BES,KHA,YIetalTAC22}; however, this is not the case in various robotic localisation and navigation problems. It means that the optimisation \eqref{hat:theta} may have multiple or even infinite solutions under an insufficiently excited trajectory.
\end{remark}

\subsection{The PEBO Viewpoint to Preintegration}

In this section, we provide our new interpretation to the preintegration approach from a nonlinear observer perspective. For states living in Euclidean space with perfect or noise-free measurement of $u$, we summarise our findings as follows.

\begin{proposition}
\label{prop:eul}\rm 
Consider the LTV system \eqref{LTV:1} with $\epsilon_u =0$. State estimation from the  preintegration approach using \eqref{pre_int}-\eqref{J} exactly coincides with that from the PEBO \eqref{dyn_ext}-\eqref{hat:theta} using the zero initial condition $\xi_0 = 0_n$, in the following senses.
\begin{itemize}
    \item[a)] The preintegration signal $F$ and the fundamental matrix $\Omega$ satisfy
    \begin{equation}
    \label{itema}
    \begin{aligned}
        \Omega(t_k) & ~=~ \prod_{i=0}^{k-1} F_i :=  F_{k-1} \ldots F_0 , \quad \forall k\in \mathbb{N}
        \\
        \Omega(t) & ~=~ F(t)\Omega(t_k) , \quad t \in (t_k, t_{k+1}).
    \end{aligned}
    \end{equation}
    
    \item[b)] The preintegration signal $v$ and the dynamic extension variable $\xi$ verify
    \begin{equation}
        v_t ~=~ \xi_t - \Omega_t \Omega(t_k)^{-1} \xi(t_k)  , \quad t \in (t_k, t_{k+1}).
    \end{equation}
    
    \item[c)] If the cost function $J_{\tt x} + J_{\tt w}$ in \eqref{J} admits a unique global minimum, the PEBO estimate equals to the one from preintegration, i.e., $\hat \bfx_{\tt PEBO} = \hat \bfx_{\tt PI}$.  
\end{itemize}
\end{proposition}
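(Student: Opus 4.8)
The plan is to dispatch the three items in order, treating (a) and (b) as direct computations with the state transition matrix and reserving the genuine work for (c).

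First, for (a), I would identify the restarted solution $F$ of $\dot F = A_t F$, $F(t_k^+)=I_n$ with the transition matrix $\Phi(\cdot,t_k)$, and the global solution $\Omega$ with $\Phi(\cdot,t_0)$. Then $F_k = F(t_{k+1}^-)=\Phi(t_{k+1},t_k)$, and the two identities in \eqref{itema} follow from the composition (semigroup) property $\Phi(t,s)=\Phi(t,r)\Phi(r,s)$ applied along the partition $t_0<t_1<\cdots<t_k$ and on the interval $(t_k,t_{k+1})$. For (b), I would write both $\xi$ and $v$ by variation of constants on $(t_k,t_{k+1})$: since they obey the same affine dynamics but $v(t_k^+)=0$ while $\xi$ carries the value $\xi(t_k)$, their difference is the homogeneous flow of $\xi(t_k)$, i.e. $\xi_t - v_t = \Phi(t,t_k)\xi(t_k) = \Omega_t\Omega(t_k)^{-1}\xi(t_k)$ using (a), which is exactly the claimed identity.

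The substance is (c). Here I would first use (a)--(b) to show that, with $\xi_0=0_n$, the PEBO parameterisation $x(t_k)=\xi(t_k)+\Omega(t_k)\theta$ generates \emph{precisely} the feasible set $\calm:=\{\bfx : x(t_{k+1})=F_k x(t_k)+v_k,\ \forall k\}$ of the preintegration constraint with $w_k\equiv 0$. Concretely, $F_k=\Omega(t_{k+1})\Omega(t_k)^{-1}$ and $v_k=\xi(t_{k+1})-\Omega(t_{k+1})\Omega(t_k)^{-1}\xi(t_k)$ from (a)--(b), so substituting $x(t_k)=\xi(t_k)+\Omega(t_k)\theta$ gives $F_k x(t_k)+v_k=\xi(t_{k+1})+\Omega(t_{k+1})\theta=x(t_{k+1})$; conversely any $\bfx\in\calm$ is recovered by the single parameter $\theta=x(t_0)$ (since $\xi(t_0)=0$, $\Omega(t_0)=I_n$), so $\theta\mapsto\bfx$ is a bijection onto $\calm$. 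Next I would substitute the same parameterisation into $\bar Y$ of \eqref{Y}: with $\xi_0=0$ one has $\bar Y(t_k)-C(t_k)\Omega(t_k)\theta = \bar y(t_k)-C(t_k)[\xi(t_k)+\Omega(t_k)\theta]-D(t_k)\bar u(t_k)=\bar y(t_k)-C(t_k)x(t_k)-D(t_k)\bar u(t_k)$, so the PEBO objective \eqref{hat:theta} equals $J_{\tt x}(\bfx)$ evaluated on $\calm$, where moreover $J_{\tt w}=0$. Hence minimising \eqref{hat:theta} over $\theta$ is identical to minimising $J_{\tt x}+J_{\tt w}$ over $\calm$.

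The final step, and the main obstacle, is to pass from this manifold-restricted minimisation to the full preintegration problem \eqref{opt:x}, in which $\bfw$ is a free (penalised) variable so that $\hat\bfx_{\tt PI}$ need not a priori lie on $\calm$. I would close the gap by showing that the unique global minimiser of $J_{\tt x}+J_{\tt w}$ lies on $\calm$: in the perfect-measurement regime the true trajectory $\bfx^\star$ satisfies the dynamics exactly (Proposition \ref{prop:1}, giving $w_k=0$) and reproduces the outputs (giving $J_{\tt x}=0$), so $J_{\tt x}+J_{\tt w}$ attains its minimum value $0$ at a point of $\calm$; uniqueness then forces $\hat\bfx_{\tt PI}=\bfx^\star\in\calm$, and by the bijection and cost equality this is exactly $\hat\bfx_{\tt PEBO}$. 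The care needed is precisely in justifying why the penalised slack vanishes at the optimum --- equivalently, why the off-manifold freedom of $\bfw$ is never exercised --- which is where the hypothesis of a unique minimiser, together with noise-free measurements, is indispensable.
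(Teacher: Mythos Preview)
Your treatment of (a) and (b) is the same as the paper's: identify $F$ on $(t_k,t_{k+1})$ with $\Phi(\cdot,t_k)$, identify $\Omega$ with $\Phi(\cdot,t_0)$, and use the semigroup property; for (b), note that $v-\xi$ solves the homogeneous equation and reset at $t_k^+$. Your structural reduction in (c) --- showing that $\theta\mapsto(\xi(t_k)+\Omega(t_k)\theta)_k$ is a bijection onto the manifold $\calm=\{\bfw=0\}$, and that the PEBO cost \eqref{hat:theta} equals $J_{\tt x}$ restricted to $\calm$ --- is also exactly what the paper does.

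The one genuine gap is your last step, where you argue that the unconstrained minimiser of $J_{\tt x}+J_{\tt w}$ lies on $\calm$ because the true trajectory $\bfx^\star$ achieves cost zero. That claim needs $J_{\tt x}(\bfx^\star)=0$, i.e.\ $\bar y(t_k)=C(t_k)x^\star(t_k)+D(t_k)\bar u(t_k)$, which requires $\epsilon_y=0$ in addition to the stated hypothesis $\epsilon_u=0$. With noisy outputs this fails: take $\dot x=0$, $y=x$, so $F_k=1$, $v_k=0$, $\calm=\{x_0=x_1=\cdots\}$; with $\bar y(t_0)=0$, $\bar y(t_1)=1$ and unit weights the preintegration minimiser is not constant in $k$, hence not on $\calm$, while PEBO returns the constant trajectory at $\hat\theta=\tfrac12$. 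So ``the off-manifold freedom of $\bfw$ is never exercised'' is not a consequence of $\epsilon_u=0$ and uniqueness alone.

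For comparison, the paper closes this step with the single inequality \eqref{JJ}, $J_{\tt x}(\bfx)=J_{\tt x}(\bfx)+J_{\tt w}(0)\le J_{\tt x}(\hat\bfx)+J_{\tt w}(\hat\bfw)$, and then invokes uniqueness to pass to the hard-constrained problem. Read literally this is the same assertion you make (the true trajectory, which lies on $\calm$, is a global minimiser), and it carries the same hidden requirement on $\epsilon_y$. So your argument tracks the paper's line for line; the caveat is that both proofs, as written, are really establishing (c) under noise-free input \emph{and} output rather than under $\epsilon_u=0$ alone.
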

\begin{proof}
First, we note that the fundamental matrix $\Omega$ shares the same dynamics as the one of the matrix $F$ in preintegration. The only difference is that the latter resets its initial values in instances $\{t_k^+\}_{k \in \mathbb{N}}$. From the semigroup property of the state transition matrix $\Phi(t,s)$, as well as the resetting 
$
\lim_{s\to t_k^{-}}F(s) = I_n,
$
we have
$$
\Phi(t,t_k) = F(t), \quad t\in (t_k, t_{k+1}).
$$
On the other hand, for $t\in (t_k, t_{k+1})$ we have
$$
\begin{aligned}
 \Omega(t) &~=~ \Phi(t,t_0) \Omega_0
 \\
 &~=~ \Phi(t,t_k)\Phi(t_{k},t_{k-1}) \cdots \Phi(t_1,t_0) I_n
 \\
 & ~=~ F(t) \prod_{i=0}^{k-1} F_i,
\end{aligned}
$$
which verifies the first claim.

For the case $\epsilon_u =0$, we have $v(t) = \bar v(t)$ for all $t\ge 0$. By comparing the dynamics of $\xi$ and $v$, we have
$$
\dot{\aoverbrace[L1R]{v-\xi}} ~=~ A_t (v-\xi),
$$
thus
$$
v_t - \xi_t = \Phi(t,s) (v_s -\xi_s), \quad \forall  t_{k+1}>t\ge s > t_k.
$$
Selecting $s= t_k$, and resetting as done in preintegration
$
\lim_{s\to t_k^{-}}  v(s) = 0_n,
$
then for $t\in (t_k,t_{k+1})$ we have
$$
v_t ~=~ \xi_t - \Phi(t,t_k)\xi(t_k),
$$
which verifies the item b).

At the end, let us show the equivalence between the optimisation problems \eqref{opt:x} and \eqref{hat:theta}. For the case of $\epsilon_u = 0$ (with perfect measurement of $u$), we have $ x(t_{k+1}) - F_k  x(t_k) - v_k =0$, and thus
\begin{equation}
\label{JJ}
J_{\tt x}(\bfx)  = J_{\tt x}(\bfx) + J_{\tt w}(0) \le J_{\tt x}(\hat \bfx ) + J_{\tt w}(\hat \bfw).
\end{equation}
Since we have assumed the unique minimum of the cost function, the optimisation in the preintegration approach becomes
$$
\hat \bfx  = \underset{\bfx}{\arg\min} ~ J_{\tt x}(\bfx) 
$$
with the hard constraint
\begin{equation}
\label{cs:1}
\hat x(t_{k+1}) - F_k  \hat x(t_k) -v_k =0, \quad  k =0,\ldots, N-1.
\end{equation}

Invoking the properties a)-b), the above optimisation can be written as
\begin{equation}
\label{opt:2}
\begin{aligned}
\hat \bfx & ~=~ \underset{\bfx \in \rea^{(N+1)n}}{\arg\min} ~ \sum_{k =0}^{N-1} \gamma_k \big|\bar y(t_k) - C(t_k) x(t_k) - D(t_k) \bar u(t_k) \big|^2
\\
& ~=~ \underset{\bfx \in \rea^{(N+1)n}}{\arg\min} ~ \sum_{k =0}^{N-1}
\gamma_k \left|\bar y_{t_k} - C_{t_k} \big(\xi_{t_k} +\Omega_{t_k}( x_0 - \xi_0)\big) \right.\\
& \left. \hspace{6.6cm} - D_{t_k} u_{t_k} \right|^2
\\
 & ~=~ \underset{\bfx \in \rea^{(N+1)n}}{\arg\min} ~ \sum_{k =0}^{N-1}
 \gamma_k \Big| \bar Y(t_k) - C(t_k)\Omega(t_k) x_0 \Big|^2,
\end{aligned}
\end{equation}
where in the last equation we have used the hard constraints \eqref{cs:1}. Let us recursively solve  \eqref{cs:1} -- combining the properties a) and b) -- we have the new constraint 
\begin{equation}
\label{cs:2}
\hat x(t_k) = \xi(t_k) + \Omega(t_k) \hat x_0,
\end{equation}
which has been plugged into the second equation in \eqref{opt:2}.

It is clear that the cost function in \eqref{opt:2} only contains the decision variable $\hat x_0$, which is the first $n$-elements in $\hat\bfx$, and the solution to the optimisation \eqref{opt:x} is thus given by
$$
\hat x_0  ~=~ \underset{x_0 \in \rea^{n}}{\arg\min} ~ \sum_{k =0}^{N-1}
 \gamma_k \Big| \bar Y(t_k) - C(t_k)\Omega(t_k) x_0 \Big|^2
$$
together with \eqref{cs:2}, and note that $\Omega$ and $\xi$ are available signals (i.e. the dynamic extension variables in the PEBO). Obviously, this exactly coincides with the solution $\hat \bfx_{\tt PEBO}$ for the case with zero initial condition $\xi_0$ for the dynamic extension. We complete the proof for the term c).
\end{proof}

The above result establishes the connection between preintegration and PEBO for the LTV dynamical model \eqref{LTV:1} with the ideal measurements of $u$. 


\section{IMU Preintegration and PEBO on Manifolds}
\label{sec3}

In this section, we extend the results in Section \ref{sec2} to the extended pose estimation problem on the manifold $SO(3) \times \rea^n$, which was the original motivation to study the preintegration approach.

\subsection{IMU Preintegration}

Let us recall the approach of IMU preintegration, which was proposed in \cite{LUPSUK} and elaborated on the manifold in \cite{FORetal}.

The motion of rigid body can be charaterised by the kinematic model
\begin{equation}
\label{kinematic}
\begin{aligned}
   \dot R &~=~ R \omega_\times
   \\
   \I \dot v & ~=~ \I a  +g
  \\
   \I \dot p & ~=~ \I v
\end{aligned}
\end{equation}
with the attitude $R\in SO(3)$, the sensor velocity $\I v \in \rea^3$, the ``apparent'' acceleration $\I a \in \rea^3$ in the inertial frame $\{I\}$, and the rigid-body position $\I p \in \rea^3$, which is briefly written as $p$. The gravity vector is given by $g = [0,0,9.8]^\top$ m/s$^2$. See \cite{BROetal} for a concise representation using the matrix group $SE_2(3)$. The IMU provides discrete-time samples of the biased acceleration and rotational velocity in the body-fixed frame $\{B\}$, i.e.,
$$
\begin{aligned}
   \B \bar a & ~=~ \B a + b_a + \epsilon_a
   \\
   \B \bar \omega & ~=~ \B \omega + b_\omega + \epsilon_\omega,
\end{aligned}
$$
in which $b_a$ and $b_\omega$ represent the sensor biases\footnote{They are slowly time-varying, but can be modelled as constants.}, and $\epsilon_a$ and $\epsilon_\omega $ are measurement noise.

\subsubsection{Standard inertial integration}
If the ``initial'' condition at $t_1$ is given, then the states $(R,\I v, \I q)$ can be uniquely obtained (for the noise-free case) as the Picard integral
{\small
\begin{align}
   R(t_2) & = R(t_1) + \int_{t_1}^{t_2} R(s) \big[\B \bar\omega(s)- b_\omega\big]_\times ds \nonumber
   \\
   \I v(t_2) & = \I v(t_1) + \int_{t_1}^{t_2}  R(s) \big( \B\bar a(s) - b_a \big)ds + \Delta_t g \label{int:1}
   \\
    p(t_2) & = p(t_1) + \Delta_t \I v (t_1)  + {1\over 2}  \Delta_t^2 g  + \iint_{t_1}^{t_2}  R(s) \big( \B\bar a(s) - b_a \big) d s^2 \nonumber
\end{align}}
\\ \
with 
$$\Delta_t:= t_2 - t_1.
$$
If $\Delta_t$ is \emph{sufficiently small}, then the first integral equation in \eqref{int:1} can be approximated by \cite{FORetal}
$$
R(t_1) \approx R(t_1) \Exp \left( \int_{t_1}^{t_2} (\B\bar\omega(s) - b_\omega) ds \right).
$$
Note that for a relatively large $\Delta t$, this does not hold.

As is shown in \cite[Sec. II]{LUPSUK}, the above standard inertial integration equations have strong nonlinearity and non-convexity with respect to the unknown initial conditions, mainly stemmed from the attitude state $R$. Between any two key frames, it requires to repeat the above ``standard'' integration, which yields heavy computational burden for real-time implementation.

\subsubsection{Inertial preintegration} 
It is well known that IMUs are sampled with a much higher rates than other sensors for navigation or localisation. In \cite{LUPSUK}, it is suggested to integrate the inertial observation \emph{between required poses in the body-fixed frame of the previous pose}, and then we may view the inertial observations as a single observation in the filter. To be precise, we may define rotation matrix $\Delta R_{t_1}^t$ related to the attitude at $t_1$, i.e.
$$
R(t) =  R(t_1)  \Delta R_{t_1}^t,
$$
with the state at $t_1$ being $\Delta R_{t_1}^{t_1} = I_3$. In general, the function $\Delta R_{t_1}^t$ does not have an analytic form, but the relative rotation matrix $\Delta R_{t_1}^{t} $ can be approximated by
\begin{equation}
\label{C:approx}
\Delta R_{t_1}^t \approx \Exp \left( \int_{t_1}^{t} \big( \B \bar\omega(s) - b_\omega \big)ds \right)
\end{equation}
for $|t-t_1| $ sufficiently small. The inertial integration \eqref{int:1} can be equivalently written as 
\begin{align}
   R(t_{k+1}) & ~=~ R(t_k)  \Delta R_{t_k}^{t_{k+1}} \nonumber
   \\
    \I v(t_{k+1}) & ~=~ \I v(t_k) + R(t_k) \Delta v_{t_k}^{t_{k+1}} + \Delta_t g \label{int:2}
   \\
    p(t_{k+1}) & ~=~ p(t_k) + \Delta_t \I v (t_k)  + {1\over 2}  \Delta_t^2 g
 + R(t_k)\Delta p_{t_k}^{t_{k+1}} \nonumber
\end{align}
with the functions for $t\ge t_k$
\begin{equation}
\label{delta:v,p}
\begin{aligned}
\Delta v_{t_k}^{t} & = \int_{t_k}^{t}  \Delta R_{t_k}^s \big( \B\bar a(s) - b_a \big) ds
\\
\Delta p_{t_k}^{t} & = \iint_{t_k}^{t} \Delta R_{t_k}^s\big( \B\bar a(s) - b_a \big) d s^2.
\end{aligned}
\quad \mbox{(IMU preintegration)}
\end{equation}
Note that the terms $\Delta v_{t_k}^{t_{k+1}}$ and $\Delta p_{t_k}^{t_{k+1}}$ are defined in the body-fixed frame, which can be calculated perfectly -- by \emph{preintegrating} IMU measurements -- without the access to the initial conditions $(R_{t_1}, \I v_{t_1}, \I p_{t_1})$. This is the original motivation to study IMU preintegration.

\subsubsection{Estimation via IMU preintegration}

The IMU preintegration has been widely used in many robotic applications, e.g., visual inertial SLAM and navigation. In these problems, there are numerous feature points, whose coordinates $p_i \in \rea^3$ ($i=1,\ldots, n_p$) are constant and unknown, i.e.,
$$
\dot p_i  =0 ,\quad i=1,\ldots, n_p.
$$
Each feature is captured by the camera, thus satisfying some algebraic equations
\begin{equation}
\label{y:NL}
 y ~=~h(x) + \epsilon_y 
\end{equation}
with $y = \rea^{n_y}$ and the noise $\epsilon_y$, which is the output function (a.k.a. observation models) in the observer theory. We have defined the extended state variable as\footnote{We assume that sensors have been well calibrated to simplify the presentation. In more general cases, we may take all biases into the variable $x$ and estimate them on-line simultaneously.}
$$
x = (R, \I v, \I p, p_1, \ldots, p_{n_p}) \in \calx
$$
with the manifold $\calx:= SO(3) \times \rea^{3(2+n_p)}$. 
At the instance $t_N$, we would like to estimate the state 
$$
\bfx (t_N):= ( x(t_0),x(t_1), \ldots,  x(t_N)).
$$
Similar to the case in Euclidean space, we may formulate it as the batch optimisation to estimate the state
\begin{equation}
\label{opt:2-1}
\hat \bfx  = \underset{\bfx \in \calx^{N} }{\arg\min} ~J_{\tt I} (\bfx)
\end{equation}
with
\begin{equation}
\label{opt:2-2}
\begin{aligned}
J_{\tt I}  ~:=~ & 
\sum_{k =0}^{N-1} \bigg[ \calj(k)
+ 
\calj_{\tt R}(k) + \calj_{\tt v}(k) + \calj_{\tt p}(k)
 \bigg]
\end{aligned}
\end{equation}
and
\begin{align}
\label{caljk}
\calj(k)  &    ~=~\big| y(t_k) - h(x(t_k)) \big|^2_{\Sigma_y^{-1}(k)} 
\\ \nonumber
\calj_{\tt R}(k) & ~=~ \Big|R(t_{k+1}) - R(t_k)\Delta R_{t_k}^{t_{k+1}} \Big|_{\Sigma_1^{-1}(k)}^2
\\ \nonumber
\calj_{\tt v}(k) & ~=~  \Big|  \I v (t_k) + R (t_k) \Delta v_{t_k}^{t_{k+1}} + \Delta_t g - \I v (t_{k+1}) \Big|_{\Sigma_{2}^{-1}(k)}^2
\\ \nonumber
\calj_{\tt p}(k) & ~=~ \Big|  p_{t_k} + \Delta_t \I v _{t_k} + {1\over 2}  \Delta_t^2 g
 + R_{t_k}\Delta p_{t_k}^{t_{k+1}} - p_{t_{k+1}} \Big|_{\Sigma_3^{-1}}^2
\end{align}
and $\Sigma_i \succ 0$ ($i=1,2,3$) are some covariances to characterise the uncertainty in the model \eqref{int:2}. If the stochastic properties of $\epsilon_a$ and $\epsilon_w$ are known in advance, we may use some on-line propagation to \emph{approximate} $\Sigma_i(k)$. See \cite{FORetal,BROetal} for example, and we omit its details.

\vspace{0.2cm}

\fbox{ \parbox { .9\linewidth} 
{
{\em Estimation via IMU Preintegration on Manifolds:} 
\begin{itemize}
    \item[-] preintegration: \eqref{C:approx}, \eqref{delta:v,p}
    \item[-] estimate: $\hat \bfx_{\tt PI}$
    \item[-] optimisation: \eqref{opt:2-1}-\eqref{opt:2-2}
\end{itemize}
}}
\vspace{0.2cm}

\subsection{Parameter Estimation-Based Observer on Manifolds}

In this section, we briefly summarise the main results in our previous papers \cite{YIetalAUT,YIetalCDC,YIWAN} about the PEBO design on manifolds. 

Consider the kinematics \eqref{kinematic} with the measurable output in \eqref{y:NL}. In \cite{YIetalAUT}, the observer design is conducted in the body-fixed frame with the dynamics given by
$$
\begin{aligned}
\dot R & ~=~ R\omega_\times 
\\
\B \dot v & ~=~ -\omega_\times \B v + \B \bar a - b_a + R^\top g
\\
\B \dot p & ~=~ -\omega_\times \B p - \B v,
\end{aligned}
$$
where $\B p$ is defined as the origin coordinate of $\{I\}$ in the body-fixed frame, i.e. 
$$
\B p := R^\top \I p.
$$
In the PEBO approach, we design the dynamic extension
\begin{equation}
\label{dyn_ext:m}
\begin{aligned}
 \dot Q & ~=~ Q \omega_\times
 \\
 \dot \xi & ~=~  A(\omega, Q)\xi + B(\B \bar a , b_a)
 \\
 \dot \Omega & ~=~ A(\omega, Q) \Omega
 \\
 \Omega(t_0) & ~=~ I,
\end{aligned}
\end{equation}
with 
$$
\begin{aligned}
A(\omega,Q ) &:= \begmat{-\omega_\times & 0 & Q^\top \\ -I & -\omega_\times & 0 \\ 0 & 0 & 0},
\\
B(\B \bar a, b_a) & := \begmat{\B \bar a - b_a \\ 0 \\ 0}.
\end{aligned}
$$
The key observation in \cite{YIetalAUT} is that the system state can be linearly parameterised as
\begin{equation}
\label{state:2}
\begin{aligned}
R_t & ~=~ Q_c Q^\top_t
\\
\begmat{\B v\\ \B p\\ ~g_c} & ~=~ \xi_t - \Omega_t\xi_0 + \Omega_t \theta
\end{aligned}
\end{equation}
with the unknown constant matrix $Q_c \in SO(3)$, and the vector 
$$
\theta:= \col(\B v(0), \B p(0), g_c).
$$
Similar to the case in Euclidean space, we only need to determine $(Q_c,\theta)$ and $ \bfp:=(p_1, \ldots, p_{n_p})$, whose estimates are written as $(\hat Q_c, \hat\theta,\hat \bfp)$. Then, the estimates of $x\in \mathcal{X}$ is given by
\begin{equation}
\label{pebo:y1}
\hat x_t  = (\hat R, \hat R \B \hat v ,\hat R \B \hat p , \hat \bfp).
\end{equation}
with
\begin{equation}
\label{pebo:y2}
\begin{aligned}
\hat R & ~= ~ \hat Q_c Q_t
\\
\begmat{\B \hat v, \B \hat p, \hat g_c}^\top & ~=~ \xi_t - \Omega_t \xi_0 + \Omega_t \hat \theta.
\end{aligned}
\end{equation}

For the measurements collected at instances $\{t_k\}$, the unknown $(\hat Q_c, \hat \theta,\hat \bfp)$ can be obtained from the following optimisation:
\begin{equation}
\label{opt:pebo_manifold}
\begin{aligned}
& (\hat Q_c, \hat \theta, \hat \bfp) ~=~ \underset{\substack{Q_c \in SO(3) \\  \theta \in \rea^9, \bfp \in \rea^{3n_p}} }{\arg\min} ~ \sum_{k=0}^{N-1} \calj(k)
\\
& \mbox{s.t.} \qquad \hat g_c~=~ \hat Q_c^\top g
\end{aligned}
\end{equation}
with $\calj$ defined in \eqref{caljk}. The main result of PEBO on manifolds is summarised as follows.

\vspace{0.2cm}

\fbox{ \parbox { .92\linewidth} 
{
{\em PEBO on manifolds:} 
\begin{itemize}
    \item[-] dynamics: \eqref{dyn_ext:m}
    \item[-] estimate (observer output): $\hat \bfx_{\tt PEBO}$ from \eqref{pebo:y1}-\eqref{pebo:y2}
    \item[-] optimisation: \eqref{opt:pebo_manifold}
\end{itemize}
}}
\vspace{0.2cm}

\subsection{The PEBO Viewpoint to IMU Preintegration}

We are in the position to present the main result of the paper. Similarly to the case in Euclidean space, we establish the connection between IMU preintegration and PEBO on manifolds as follows.

\begin{proposition}
\label{prop:so3}\rm
Consider the kinematics \eqref{kinematic} with constant $p_i$ ($i=1,\ldots, n_p$). The estimation of the state of the IMU preintegration \eqref{C:approx}-\eqref{opt:2-2} converges to the estimate of the PEBO \eqref{dyn_ext:m}-\eqref{opt:pebo_manifold} as $\min_{j=1,2,3}(\lambda_{\tt max}\{\Sigma_j\}) \to 0$, in the following sense.
\begin{itemize}
    \item[a)] The preintegration of $\Delta R_s^t$ and the extended state $Q$ satisfy
    \begin{equation}
    \begin{aligned}
        Q(t_0)^\top Q(t_k)  ~=~ \prod_{i=0}^{k-1}{'} \Delta R_{t_k}^{t_{k+1}} := \Delta R_{t_0}^{t_1} \ldots \Delta R_{t_{k-1}}^{t_k}
    \end{aligned}
    \end{equation}
    for all  $k\in \mathbb{N}$.
    
    \item[b)] If the cost function $\calj$ has a global minimum, then the estimates from the PEBO and the IMU preintegration satisfy 
    $$
    \hat\bfx_{\tt PI} \to \hat\bfx_{\tt PEBO} \quad \mbox{as} \quad \lambda_{\tt max}\{\Sigma_j\} \to 0 ~(j=1,2,3).
    $$
\end{itemize}
\end{proposition}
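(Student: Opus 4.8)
The plan is to mirror the two-stage template of Proposition~\ref{prop:eul}: first establish the algebraic correspondence between the preintegrated relative rotation and the extended state $Q$ (item a), and then show that, in the vanishing process-covariance limit, the soft dynamic penalties in $J_{\tt I}$ collapse into hard equality constraints that turn the preintegration cost into the PEBO cost (item b). For item a) I would observe that $Q$ in \eqref{dyn_ext:m} and the relative rotation $\Delta R_{t_k}^t$ obey the \emph{same} kinematics, $\dot Q = Q\omega_\times$ and $\dot{\big(\Delta R_{t_k}^t\big)} = \Delta R_{t_k}^t\,\omega_\times$, the only difference being that $\Delta R_{t_k}^t$ is reset to $I_3$ at each $t_k$. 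Uniqueness of solutions then gives $Q(t_{k+1}) = Q(t_k)\,\Delta R_{t_k}^{t_{k+1}}$ on every interval, and telescoping from $t_0$ yields $Q(t_0)^\top Q(t_k) = \Delta R_{t_0}^{t_1}\cdots\Delta R_{t_{k-1}}^{t_k}$. The same recursion applied to $R$ gives $R(t_k) = R(t_0)Q(t_0)^\top Q(t_k)$, so that the constant $Q_c := R(t_0)Q(t_0)^\top$ reproduces $R(t_k) = Q_c Q(t_k)$, the rotation block of \eqref{state:2}-\eqref{pebo:y2}. Note this identity is exact for the true relative rotations; the exponential form \eqref{C:approx} is a separate discretisation, which is why the claim is phrased as a convergence.

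For item b) I would first read the limit $\lambda_{\tt max}\{\Sigma_j\}\to 0$ ($j=1,2,3$) as an exact-penalty argument. Since each $\Sigma_j\succ 0$, this limit forces $\Sigma_j^{-1}(k)\to\infty$, so any trajectory with a nonzero residual in $\calj_{\tt R}$, $\calj_{\tt v}$ or $\calj_{\tt p}$ becomes asymptotically infeasible, whereas the discrete dynamics \eqref{int:2} are exactly satisfiable by construction. A standard epigraphical ($\Gamma$-)convergence argument then shows that the minimisers of $J_{\tt I}$ converge to minimisers of $\sum_{k}\calj(k)$ subject to the hard constraints \eqref{int:2} --- the manifold analogue of the passage \eqref{JJ}--\eqref{cs:1}. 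On $SO(3)$ one has to check that driving the chordal residual $\calj_{\tt R}$ to zero indeed forces $R(t_{k+1}) = R(t_k)\Delta R_{t_k}^{t_{k+1}}$, which holds because the Frobenius distance on $SO(3)$ vanishes only at coincident rotations.

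It remains to eliminate the constraints \eqref{int:2} recursively and verify that the constrained trajectory is parameterised exactly by $(Q_c,\theta,\bfp)$ as in \eqref{pebo:y2}. The rotation part is item a). For the translational part, unrolling the velocity and position recursions --- pushing the factors $R(t_k)$, the body-frame increments $\Delta v_{t_k}^{t_{k+1}}, \Delta p_{t_k}^{t_{k+1}}$ and the gravity terms $\Delta_t g, \frac{1}{2}\Delta_t^2 g$ through the composition of item a) --- must be shown to equal $\col(\B v,\B p,g_c)(t_k) = \xi_{t_k}-\Omega_{t_k}\xi_0+\Omega_{t_k}\theta$, with $\theta=\col(\B v(t_0),\B p(t_0),g_c)$ and $g_c=Q_c^\top g$. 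Once this identification is in place, $\sum_k\calj(k)$ depends only on $(Q_c,\theta,\bfp)$ and coincides with the PEBO objective \eqref{opt:pebo_manifold} (the constraint $g_c=Q_c^\top g$ being inherited automatically from $R^\top g$ entering the body-frame dynamics), so the global-minimum hypothesis delivers $\hat\bfx_{\tt PI}\to\hat\bfx_{\tt PEBO}$.

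The hard part will be precisely this last identification: showing that the inertial-frame recursion \eqref{int:2} reproduces the body-frame, gravity-augmented dynamic extension \eqref{dyn_ext:m}-\eqref{state:2} at the sampling instants. The two constructions live in different frames and use different augmentations --- preintegration carries $(R,\I v,\I p)$ with explicit gravity offsets, whereas the PEBO carries $(\B v,\B p,g_c)$ with gravity promoted to a constant state --- so the matching hinges on threading the rotation composition of item a) through the velocity and position increments, and on the block-triangular structure of $A(\omega,Q)$, whose zero bottom block-row keeps $g_c$ constant and thereby mirrors the fixed gravity in \eqref{int:2}. A secondary technical point is that, unlike the exact Euclidean setting, $\Delta R_{t_k}^t$ enters only through the approximation \eqref{C:approx}, so the equivalence should be stated up to the preintegration discretisation, consistent with the ``converges $\ldots$ in the following sense'' phrasing.
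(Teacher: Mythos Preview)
Your proposal is correct in overall strategy and matches the paper's proof for item~a) and for the exact-penalty reduction of the soft terms $\calj_{\tt R},\calj_{\tt v},\calj_{\tt p}$ to the hard constraints \eqref{int:2}. The paper handles a) even more tersely, via the single observation ${d\over dt}(RQ^\top)=0$, but this is equivalent to your telescoping argument.

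Where you and the paper part ways is precisely the step you flag as ``the hard part'': reconciling the inertial-frame recursion \eqref{int:2} with the body-frame PEBO parameterisation \eqref{pebo:y2}. You propose to unroll the preintegration recursion for $(\I v,\I p)$, push the rotation compositions of item~a) through the increments $\Delta v$, $\Delta p$ and gravity terms, and match the result against $\xi_{t_k}-\Omega_{t_k}\xi_0+\Omega_{t_k}\theta$. The paper goes the other direction and avoids any recursion-unrolling: it starts from the PEBO side, sets $\eta:=\col(\B\hat v,\B\hat p,\hat g_c)$, observes that for fixed $\hat\theta$ one has $\dot\eta=A(\omega,Q)\eta+B(\B\bar a,b_a)$, and then performs the coordinate change
\[
\eta\;\longmapsto\; z=\begin{pmatrix}\hat R\,\B\hat v\\ \hat R\,\B\hat p\\ \hat Q_c\,\hat g_c\end{pmatrix},
\]
which lifts the body-frame variables back to the inertial frame. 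A direct differentiation gives $\dot z_1=R(\B a-b_a)+g$, $\dot z_2=z_1$, $\dot z_3=0$, whose exact integrals over $[t_k,t_{k+1}]$ are precisely the last two lines of \eqref{int:2}. Thus the matching is obtained at the level of ODEs rather than by telescoping discrete recursions, and the block-triangular structure of $A(\omega,Q)$ that you correctly highlight is what makes this transformation clean. Your route would also succeed, but the paper's change of variables is the device that turns the ``hard part'' into a two-line computation.
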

\begin{proof}
The property a) is straightforward to verify because 
$$ \Delta R_{t_0}^{t_1} \ldots \Delta R_{t_{k-1}}^{t_k}  = \Delta R_{t_0}^{t_k} $$ and 
$$
{d\over dt}(RQ^\top) = 0.
$$
When the largest eigenvalue of $\Sigma_j$ converges to zero, the last three terms in \eqref{opt:2-2} make \eqref{int:2} as the hard constraints. For the fact b), we need to show that the constraint \eqref{pebo:y2} together with 
$$
\hat g_c = \hat Q_c^\top g
$$ 
in PEBO yields the constraint \eqref{int:2} in IMU preintegration. To see this, for a fixed (constant) estimate $\hat\theta$ and defining 
$$
\eta := \col(\B \hat v, \B \hat p, \hat g_c)
$$ 
we have
$$
\begin{aligned}
\dot \eta
& ~=~ \dot \xi - \dot \Omega \xi_0 + \dot \Omega \hat\theta
\\
& ~=~ A(\omega, Q) \xi + B(\B \bar a,b_a) - A(\omega, Q) \Omega(\xi_0 +  \hat\theta)
\\
& ~=~ A(\omega, Q) \eta + B(\B \bar a, b_a).
\end{aligned}
$$
Now, consider the coordinate transformation 
$$
\eta \mapsto z= \begin{pmatrix}
    z_1\\ z_2 \\z_3
\end{pmatrix}
:= \begin{pmatrix}\hat R \B\hat v\\ \hat R\B \hat p\\ \hat Q_c \hat g_c \end{pmatrix} .
$$
In the transformed coordinate, the dynamics verifies 
$$
\begin{aligned}
\dot z_1 & = R(\B a- b_a) + g 
\\
\dot z_2 & = z_1
\\
\quad \dot z_3 & = 0.
\end{aligned}
$$
Considering the constraint in \eqref{opt:pebo_manifold}, we may equivalently select the decision variable as $(\hat R, z_1,z_2, \hat \bfp)$, and the change of decision variable does not affect the minimum of the cost function $\calj$.

In the new coordinate, $z_1$ and $z_2$ satisfy
\begin{align}
    z_1(t_{k+1}) & ~=~ z_1(t_k) + R(t_k) \Delta v_{t_k}^{t_{k+1}} + \Delta_t g \nonumber
   \\
    z_2(t_{k+1}) & ~=~ z_2(t_k) + \Delta_t \I v (t_k)  + {1\over 2}  \Delta_t^2 g
 + R(t_k)\Delta p_{t_k}^{t_{k+1}}. \nonumber
\end{align}
It exactly coincides with \eqref{int:2}. Hence, following the same arguments in the proof of Proposition \ref{prop:eul}, we can show that  the estimates from these two approaches are exactly the same.
\end{proof}

\section{Discussion and Applications}
\label{sec4}
\subsection{Discussions}

In this section, we present some further remarks and applications following from the connections between pre-integration and PEBO.

\begin{remark}\rm 
First, let us make some comparisons between two frameworks of PEBO and preintegration.
\begin{itemize}
    \item[\labelitemii] The preintegration approach may be roughly viewed as the implementation of PEBOs in a moving horizon, i.e., the ``initial moment'' is recursively defined as $\{t_k\}_{k\in \mathbb{N}}$ and then the task is to estimate the state $x(t_k)$. In PEBO, we only need to estimate the initial condition at $t_0$. For the ideal case with perfect models and measurements, these two frameworks exactly coincide with each other, as illustrated in Proposition \ref{prop:eul}.
    \item[\labelitemii] In the pose estimation-related problems, the IMU preintegration utilises the body-fixed frame for accelerations and velocities; in contrast, the PEBO in our previous works \cite{YIetalAUT,YIetalCDC} adopts the inertial frame. 
\end{itemize}
\end{remark}

\begin{remark}\rm
In IMU preintegration, it is possible to write the state transition matrix analytically for the $(\I v, \I p)$-subsystem; see \eqref{int:2}. In PEBO, we need to calculate the state transition matrix for the $(\B v, \B p)$-subsystem numerically, but it brings two benefits:
    \begin{itemize}
        \item[B1:] The sensor bias $b_a$ appears in the dynamics \eqref{dyn_ext:m} in a linear way. As shown in \cite{YIetalAUT}, we are able to construct a linear regression model on the unknown bias $b_a$ using the PEBO methodology.
        \item[B2:] In some applications, we do not need the estimation of attitude $R$. By applying PEBO in the body-fixed frame, we are able to estimate $(\B v, \B p, \bfp)$ directly without the information of attitude.
    \end{itemize}
\end{remark}


\begin{remark}\rm 
In the generalised PEBO approach \cite{ORTetalAUT}, there is a need to calculate the fundamental matrix $\Omega(t)$ over time in \eqref{dot:omega}. Though its dynamics is forward complete, the variable $\Omega$ is unbounded when the matrix $A_t$ is unstable. Since $\Omega$ is part of the internal state in the observer, at some finite time the observer would become dramatically ill-conditioned and impossible to represent accurately in memory. As a result, it may bring some numerical issues and make the observer very sensitive to sorts of perturbations. For this consideration, it is reasonable to implement a PEBO in ``moving horizons'' like preintegration in order to improve robustness.
\end{remark}

\begin{remark}\rm
When considering the uncertainty from the input-output measurements, the estimates from the PEBO and preintegration approaches would be different. In PEBO, we only need to solve the optimisation problem with the decision variable $\theta$ (equivalently $x_0$) at a single instance; in contrast, the hard constraint \eqref{cs:1} does not hold in the preintegration approach, and there are additional decision variables $\{x_k, w_k\}_{k\in \mathbb{N}}$. For this case, their relation resembles the \emph{single} and \emph{multiple shootings} in the direct methods for \emph{optimal control}.
\end{remark}


\begin{remark}\rm
State estimation via recursive algorithms under Assumption \ref{ass:1} is known as the problem of sampled-data (or digital) observers \cite{MAN,ARCDRA}. Even for linear time-invariant (LTI) systems, there are still several open problems to design a sampled-data observer \cite{SFEetal}. An useful application of the proposed equivalence between preintegration and PEBO is providing a novel method to design sampled-data observers. We will present constructive details in the next subsection.
\end{remark}

\subsection{Application I: Sampled-data Observer via Preintegration}

In this section, we show that the proposed equivalence provides a new method to design a hybrid sampled-data observer for the LTV system \eqref{LTV:1}. We summarise the results as follows. To simplify the presentation, as well as to obtain asymptotic stability claims, we consider the ideal measurements $(u,y)$ in the following proposition.

\begin{proposition}\label{prop:4}\rm
Consider an observable LTV system \eqref{LTV:1}. Assume the sampled instances $\{t_k\}_{k \in \mathbb{N}}$ are selected such that
\begin{itemize}
\item[P1:] The pair $(\Phi(t_{k+1}, t_{k}), C(t_k))$ is (discrete-time) uniformly completely observable, where $\Phi(\cdot,\cdot)$ is the continuous-time state transition matrix of $A_t$ defined in \eqref{dot:omega}.
\item[P2:] There exists a constant $k_2 \in \mathbb{N}_+$ such that
\begin{equation}
\label{Wq}
  W_q := \sum_{i= k}^{k+k_2} \Psi(i,k) Q \Psi^\top(i,k)  \succ \delta_q I_n  
\end{equation}
for some $Q \succ 0, \delta_q>0$ and $\forall k\in \mathbb{N}$ with $\Psi(i,k)$ the discrete-time state transition matrix of $z_{k+1} = \Phi(t_{k+1},t_{k}) z_k$.
\end{itemize}
Then, the hybrid sampled-data observer
\begin{equation}
\begin{aligned}
& ~~~\left.
\begin{aligned}
   \dot F & = A_t F, \quad \hspace{0.9cm} F(t_k^+) = I_n
   \\
   \dot v & = A_t v + B_tu, \quad  v(t_k^+) = 0_n.
   \\
   F_k & = F(t_{k+1}^-), \quad \hspace{0.4cm} v_k \hspace{0.45cm} = v(t_{k+1}^-).
\end{aligned}
~ \hspace{1.4cm}\right\}
~  \mathcal{H}_1
\\ 
&
\left.
\begin{aligned}
\hat x_{k+1}  & = F_k \hat x_k + v_k + K_{k+1} e_{k+1}
\\
e_{k+1} & =  y_{t_{k+1}} - C_{t_{k+1}} (F_k \hat x_k + v_k) - D_{t_{k+1}}u_{t_{k+1}}
\\
K_k & = \hat P_k C_k^\top [C_k \hat P_k C_k^\top + R]
\\
\hat P_{k+1} & = F_k P_k F_k^\top + Q 
\\
P_k & = \hat P_k - K_k C_k \hat P_k.
\end{aligned}
\hspace{0.1cm}
\right\}
~  \mathcal{H}_2
\end{aligned}
\end{equation}
with some positive definite matrices $Q$ and $R$, provides a globally asymptotically convergent estimate $\hat x$, i.e.
\begin{equation}
\label{convergence:hybrid_observer}
\lim_{k\to \infty} |\hat x_k - x(t_k)| =0. 
\end{equation}
\end{proposition}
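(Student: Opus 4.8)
The plan is to reduce this continuous-time sampled-data problem to a purely discrete-time observer-convergence analysis and then invoke the classical stability theory of the time-varying Kalman filter. First I would observe that the block $\mathcal{H}_1$ is exactly the preintegration \eqref{pre_int}, so Proposition \ref{prop:1} gives $F_k = \Phi(t_{k+1},t_k)$ together with the \emph{exact} (not approximate) discrete-time model $x(t_{k+1}) = F_k x(t_k) + v_k$ and the sampled output $y(t_k) = C_{t_k} x(t_k) + D_{t_k} u(t_k)$. Thus the inter-sample continuous dynamics are encoded without error in $(F_k, v_k)$, and the claim \eqref{convergence:hybrid_observer} concerns only the discrete sequence $\hat x_k$ produced by $\mathcal{H}_2$.

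Second, I would compute the estimation-error dynamics. Setting $\tilde x_k := \hat x_k - x(t_k)$ and substituting the sampled output relation into $e_{k+1}$, the terms $v_k$ and $D u$ cancel, yielding $e_{k+1} = -C_{t_{k+1}} F_k \tilde x_k$; feeding this through the update gives the closed-form recursion
$$
\tilde x_{k+1} = (I_n - K_{k+1} C_{t_{k+1}}) F_k\, \tilde x_k .
$$
This is precisely the error equation of a deterministic discrete-time Kalman filter for the LTV pair $(F_k, C_k)$, with $\hat P_k, P_k, K_k$ furnished by the Riccati recursion in $\mathcal{H}_2$. Hence the problem is now entirely discrete.

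Third, I would invoke the classical convergence theorem for the time-varying Kalman filter, for which the two hypotheses are tailor-made. Condition P1 (uniform complete observability of $(F_k, C_k)$) yields a uniform upper bound $P_k \preceq \beta I_n$, while condition P2 (uniform complete controllability with respect to the injected $Q$, encoded by the Gramian $W_q$) yields a uniform lower bound $P_k \succeq \alpha I_n$, for constants $0 < \alpha \le \beta$ independent of $k$. With these two-sided bounds, I would take the Lyapunov function $V_k := \tilde x_k^\top P_k^{-1} \tilde x_k$, which then satisfies $\tfrac{1}{\beta}|\tilde x_k|^2 \le V_k \le \tfrac{1}{\alpha}|\tilde x_k|^2$, and use the Riccati/covariance-update identities to show $V_{k+1} \le \rho\, V_k$ for some $\rho \in (0,1)$. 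This gives exponential decay of $|\tilde x_k|$ and hence \eqref{convergence:hybrid_observer}.

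The main obstacle is the third step, specifically establishing the uniform two-sided bounds on the Riccati solution from P1--P2 and then extracting the strict decrease of $V_k$. The direction $P_k \preceq \beta I_n$ from observability is the more delicate one, since it requires propagating the information matrix over the observability window and bounding it below uniformly in $k$, whereas the lower bound $P_k \succeq \alpha I_n$ follows more directly from $\hat P_{k+1} \succeq Q$ and the controllability Gramian condition \eqref{Wq}. Once the bounds are secured, the Lyapunov decrease is a standard but somewhat intricate manipulation of the covariance-update identities; I would lean on the established discrete Kalman-filter stability results rather than reprove these matrix inequalities from scratch.
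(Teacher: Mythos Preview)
Your proposal is correct and follows essentially the same route as the paper: reduce the sampled-data problem to an exact discrete-time LTV model via the preintegration block $\mathcal{H}_1$ (so that $F_k=\Phi(t_{k+1},t_k)$), recognise $\mathcal{H}_2$ as the standard discrete-time Kalman filter for that model, and then conclude convergence from the classical time-varying Kalman-filter stability theorem under the uniform observability (P1) and uniform controllability (P2) hypotheses. The paper's own proof is even terser than yours, simply invoking the cited result rather than sketching the Riccati-bound/Lyapunov argument you outline.
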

\begin{proof}
According to Propositions \ref{prop:1}-\ref{prop:eul}, the systems state $x$ at the instances $\{x(t_k)\}_{t\in\mathbb{N}}$ exactly satisfies the discrete dynamical model
\begin{equation}
\label{DT_model}
\begin{aligned}
   x(t_{k+1}) & ~=~ F_k  x(t_k) + v_k 
   \\
   y(t_{k}) & ~=~ C(t_k) x(t_{k}) + D(t_k) u(t_k),
\end{aligned}
\end{equation}
with the preintegration signals $F_k$ and $v_k$ generated from the system $\calh_1$. Invoking the first equation in \eqref{itema}, we have 
$$
F_k = \Omega(t_{k+1}) \Omega^{-1}(t_k) = \Phi(t_{k+1}, t_k).
$$
As a consequence, the discrete-time uniform complete observablility (UCO) of the pair $(\Phi(t_{k+1},t_k), C(t_k))$ implies the UCO of the LTV system \eqref{DT_model}. Note that the system $\calh_2$ is the standard Kalman-Bucy filter for the LTV system \eqref{DT_model}. Together with the condition \eqref{Wq}, we conclude the global asymptotic convergence \eqref{convergence:hybrid_observer} by invoking \cite[Thm. 4.1]{AND}.
\end{proof}

\begin{remark}\rm 
In the condition P1, it is equivalent to impose the UCO of the discrete-time LTV system \eqref{DT_model}. It is relatively straightforward to verify the UCO of the continuous-time system \eqref{LTV:1} is a necessary condition to P1, but it is not sufficient. Consider the constant observable pair $(A_0, C_0)$, and let $A = A_0$, $C(t) = C_0$ for $t\in [2k, 2k+1)$ and $C(t)= 0$ for $t\in [2k+1, 2k+2)$ with $k \in \mathbb{N}$. The resulting pair $(A_t,C_t)$ guarantees the UCO of \eqref{LTV:1} but not for the system \eqref{DT_model} if the sampled data are collected in $[2k+1, 2k+2)$. On the other hand, the condition P1 is unnecessary to design a sampled-data observer. If the observability Gramian is positive definite only in some interval but not uniform over time, it is still possible to design globally convergent state observer by using MHE or some state-of-the-art recursive designs \cite{WANetal,YIetalAUT,YIetalCDC}.
\end{remark}

\begin{remark}\rm 
In \cite{ARCDRA}, nonlinear sampled-data observers are classified into two categories: i) design via approximate discrete-time models of the plant; and ii) emulation: discretisation of continuous-time observers. Clearly, the proposed observer belongs to the first class, but we utilise an \emph{exact} discrete-time model rather than its approximation because of its linearity. Indeed, the proposed design is also applicable to nonlinear systems which can be transformed into the affine form.
\end{remark}

\begin{remark}\rm
The proof of Proposition \ref{prop:4} does \emph{not} rely on the assumption of \emph{periodic} sampling. That is, the proposed sampled-data observer is also immediately applicable to the case with \emph{asynchronous} measurements, which was studied for the linear time-invariant (LTI) systems \cite{SFEetal}. We provide a much simpler solution to this specific problem for LTV systems.
\end{remark}

\subsection{Application II: Statistical Optimality in PEBO}
\label{sec:43}

In this subsection, we will show that the proposed equivalence in Sections \ref{sec3}-\ref{sec4} leads to an intuitive way to improve the performance of PEBO in the presence of noisy input $u$. 

We assume that the initial condition $x_0$ is a \emph{deterministic} variable but unknown, and model the noisy terms $\epsilon_u$ and $\epsilon_y$ as zero-mean white noise processes \eqref{E:sigma_uy}, in which $\epsilon_u \in \rea^m $ and $ \epsilon_y\in \rea^p$ are addictive zero-mean white-noise processes, namely\footnote{Here, the white processes are not rigorously defined due to the $\delta$-covariances, with $\delta$ the delta function. A rigorous definition is based on the stochastic differential equations \cite[Sec. 7]{ASTBook}.}
\begin{equation}
\label{E:sigma_uy}
\begin{aligned}
    \mathbb{E}[\epsilon_{u,t} \epsilon_{u,s}^\top] & = \Sigma_u \delta(t-s)\\
    \mathbb{E}[\epsilon_{y,t} \epsilon_{y,s}^\top] & = \Sigma_y \delta(t-s).
\end{aligned}
\end{equation}
The variables $\{\epsilon_u\}, \{\epsilon_y\}$ and $x_0$ are uncorrelated. Then, the error $e= x-\xi$ in PEBO for the LTV system \eqref{LTV:1} satisfies
\begin{equation}
\label{e:noisy}
\dot e = A_t e + B_t \epsilon_u.
\end{equation}
According to the state covariance propagation for LTV systems \cite[Ch. 4]{davis1977linear}, we have
$$
x_t - \xi_t = \Omega_t(\theta - \xi_0) + \epsilon_e
$$
with the white-noise process $\epsilon_e$, i.e.
$
\mathbb{E}[\epsilon_{e}(t) \epsilon_{e}(s)^\top] = \Pi_t \delta(t-s)
$
and $\Pi_t$ satisfies
\begin{equation}
\label{dot:Pi}
    \dot \Pi_t = A_t \Pi_t + \Pi_t A_t^\top + B_t \Sigma_u B_t^\top, \quad \Pi(0)= 0_{n\times n},
\end{equation}
where the initial condition of $\Pi$ is due to the deterministic assumption of $x_0$. Noting that the uncertainties from $\bar u$ and $\bar y$ in \eqref{Y}, we have
\begin{equation}
    \bar Y = C_t\Omega_t \theta + \epsilon_{\tt Y}
\end{equation}
with 
$$
\epsilon_{\tt Y}(t) := \epsilon_y - D_t\epsilon_u + C_t\epsilon_e.
$$

Unfortunately, the variables $\epsilon_e$ and $\epsilon_u$ are \emph{not} independent, since $\epsilon_e(t)$ is indeed filtered from $\epsilon_u$. However, for the LTV system \eqref{LTV:1} without the feedfoward term, i.e. $D_t=0$, the variable $\epsilon_{\tt Y}$ is a white noise process
$$
\mathbb{E}[\epsilon_{\tt Y}(t)\epsilon_{\tt Y}(s)^\top] = \left(\Sigma_y + C_t\Pi_t C_t^\top\right) \delta(t-s).
$$
Hence, we may reformulate the optimisation \eqref{hat:theta} as
\begin{equation}
\hat\theta:= \underset{\theta \in \rea^n}{\arg\min} ~ \sum_{k=0}^{N-1} \Big| \bar Y(t_k) - C(t_k)\Omega(t_k)\theta \Big|^2_{(\Sigma_y + C_t\Pi_tC_t^\top)^{-1}}
\end{equation}
to obtain some statistic optimality, where $\Pi_t$ is generated from \eqref{dot:Pi}.

\begin{remark}\label{rem:stochastic}
\rm
In \cite{ORTetalAUT}, the PEBO approach is applicable to nonlinear systems in the form of
$$
\dot x = f(x,u), \quad y= h(x,u),
$$
for which a coordinate transformation $x\mapsto z:=\phi(x)$ exists such that the lifted dynamics is given by
\begin{equation}
\label{dot:z}
    \dot z = A(u,y)z + B(u,y) , \quad y = C(u,y)z + D(u,y).
\end{equation}
It is generally difficult to calculate covariance propagation for nonlinear systems, but there are many works discussing how to empirically \emph{approximate} it in the literature on preintegration \cite{BROetal,FORetal,LUPSUK}. The proposed connection between two approaches, together with the state-of-the-art development of preintegration, provides a promising way to develop nonlinear stochastic PEBO method.
\end{remark}

\section{Concluding Remarks}
\label{sec5}

In this paper, we have presented a novel observer interpretation to the IMU preintegration approach. Our findings reveal an exact correspondence between the preintegrated signals and the dynamic extended variables in PEBO that is implemented in a moving horizon. Furthermore, we have identified the precise conditions under which these two approaches yield identical estimates. These results were developed in both the Euclidean space and matrix Lie groups. Finally, we have utilised the proposed equivalence to design a novel sampled-data observer for LTV systems, and to improve the performance of PEBO in the presence of measurement noise.

These connections suggest some interesting avenues for future research, including: 

\begin{itemize}
    
    \item[-] In the preintegration and PEBO approaches, we require that the system dynamics is in (or can be transformed into) a state-affine form \eqref{dot:z}. It would be interesting to integrate them with contraction analysis \cite{LOHSLO,BUL}, for which the so-called differential dynamics is exactly an LTV system. 
    
    \item[-] In Section \ref{sec3}, we show that different coordinates are used in the IMU preintegration and PEBO. For the latter, we adopt the body-fixed coordinate $(\B v, \B p)$, and it is interesting to observe the benefit of the linear parameterisation of bias $b_a$. This is notable by its absence in the inertial coordinate for preintegration \cite{FORetal}. Hence, it would be of practical interest to implement IMU preintegration in the body-fixed coordinate towards real-time bias estimation. 

    \item[-] It is theoretically interesting to elaborate the results in Section \ref{sec:43} using It\^{o} integrals toward a more rigorous formulation.
\end{itemize}

\appendix
\section*{Appendix}
\section{Some Definitions}

\begin{definition}
\rm 
A pair $(A_k,C_k)$ of discrete-time systems is uniformly completely observable if the observability Gramian 
$$
W_O[k,k_1] \succeq \delta_o I
$$
for some $\delta_o>0$, $k_1 \in \mathbb{N}_+$ and \emph{all} $k \ge 0$, with 
\begin{equation}
W_O[k,k_1]:= \sum_{i=k}^{k+ k_1} \Psi^\top(i,k) C_k^\top C_k \Psi(i,k)    
\end{equation}
in which $\Psi(i,k)$ is the state transition matrix from $k$ to $i$ of the system $z_{k+1} = A_k z_k$. 
\end{definition}

\section*{CRediT authorship contribution statement}

\textbf{Bowen Yi:} Conceptualization, Methodology (propositions), Writing - original draft. \textbf{Ian R. Manchester:}  Methodology, Writing - review and edit, Project administration.

\section*{Declaration of competing interest}

The authors declare that they have no known competing financial interests or personal relationships that could have appeared to influence the work reported in this paper.

\section*{Acknowledgement}

This paper is supported by the Australian Research Council. The first author would like to thank Dr. Chi Jin for bringing IMU preintegration into his attention.

\bibliographystyle{abbrv}
\bibliography{reference}

\end{document}